\newcommand{\Z}{\mathbb{Z}}
\newcommand{\C}{\mathbb{C}}
\newcommand{\R}{\mathbb{R}}
\newcommand{\imi}{\mathsf{i}}
\newtheorem{theorem}{Theorem}
\newtheorem{lemma}{Lemma}
\theoremstyle{remark}
\newtheorem{remark}{Remark}
\newtheorem{example}{Example}
\title{On beta pentagon relations}
\author{Rinat Kashaev}
\address{Section de math\'ematiques, Universit\'e de Gen\`eve,
2-4 rue du Li\`evre, 1211 Gen\`eve 4, Suisse\\}
\date{March 30, 2014}
\thanks{Supported in part by Swiss National Science Foundation}
\dedicatory{Dedicated to Professor Ludwig Faddeev on the occasion
     of his $80$th birthday}
\begin{document}

\begin{abstract}
The (quantum) pentagon relation underlies the existing constructions of three dimensional quantum topology in the combinatorial framework of triangulations.  Following the recent works \cite{KashaevLuoVartanov2012,AndersenKashaev2013}, we discuss a special type of integral pentagon relations and their relationships  with the Faddeev type operator pentagon relations.
\end{abstract}
\maketitle
\section{Introduction}
By \emph{pentagon relation} in a broad sense we mean any algebraic relation that can be given an interpretation in terms of Pachner's 2-3 move for triangulated three dimensional manifolds. A particular but still large class of pentagon relations are satisfied by $6j$-symbols arising from representation theory of Hopf algebras. A first historical example of this kind was published in 1953 by Biedenharn and Elliott \cite{MR0052585,Elliott1953} within the quantum theory of angular momentum.

Motivated by the problem of giving an exact combinatorial formulation for quantum Chern--Simons partition functions with non-compact gauge groups, the constructions of the works~\cite{KashaevLuoVartanov2012,AndersenKashaev2013} are based on five-term integral relations which  have a structure similar to that of the Biedenharn--Elliott relation but where the discrete variables, corresponding to the equivalence classes of irreducible representations of the group $SU(2)$, are replaced by continuous variables and the discrete sums by integrals. In this work, we put those relations into a general common framework and elucidate their connections with the Faddeev type operator pentagon relations. The framework is based on the theory of locally compact abelian groups and is defined as follows.

For two sets $S$ and $T$, we let $S^T$ denote the set of all maps from $T$ to $S$. For any non-negative integer $n$, we let $[n]$ denote the set $\Z_{\ge0}\cap\Z_{\le n}$ and $\Delta[n]$ denote  the standard combinatorial $n$-simplex seen, for example, as the power set $2^{[n]}$. We also denote by $\Delta[n]_i$ the set of $i$-dimensional simplexes of $\Delta[n]$.

Let $A$ be a locally compact abelian group together with a fixed Haar measure $dx$. We say that a function
\begin{equation}\label{eq-10}
\phi\colon [4]\times A^2\to\C,\quad (j,x,y)\mapsto \phi_j(x,y)
\end{equation}
is of the \emph{beta pentagon type} over $A$ if the following five term integral relation is satisfied:
\begin{equation}\label{eq00}
  \phi_1(x,y)\phi_3(u,v) =\int_{A}
 \phi_4(uy,v\bar z) 
 \phi_2(xyuv\bar z,z)  \phi_0(xv,y\bar z)dz
\end{equation}
where we denote $\bar x\equiv x^{-1}$. The relation~\eqref{eq00} itself will be called the \emph{beta pentagon relation} over $A$. Our motivation for it comes from the following combinatorial interpretation. 

Given a map as in \eqref{eq-10}, we define another map
\begin{equation}
W\colon \Delta[4]_3\times A^{\Delta[3]_1}\to \C
\end{equation}
by assigning
\begin{equation}
W(\partial_i\Delta[4],x)=\phi_i(x_{01}x_{23}\bar x_{03}\bar x_{12},x_{03}x_{12}\bar x_{02}\bar x_{13})
\end{equation}
where $x_{jk}$ is the $x$-image of the edge $\{j,k\}$. Then, the equality 
\begin{equation}
\prod_{i\in\{1,3\}}W(\partial_i\Delta[4],\varepsilon_i^{*}x)=
\int_{A}dx_{13}\prod_{i\in\{0,2,4\}}W(\partial_i\Delta[4],\varepsilon_i^{*}x),\quad \forall x\in A^{\Delta[4]_1},
\end{equation}
with the standard injections
\(
\varepsilon_i\colon [3]\to [4],\ i\in[4],
\)
defined by
\begin{equation}
\varepsilon_i(j)=\left\{
\begin{array}{cl}
 j&\mathrm{if}\ j<i;\\
 j+1&\mathrm{otherwise},
\end{array}
\right.
\end{equation}
is a consequence of the beta pentagon relation for  $\phi$.
Such a combinatorial interpretation can be used as a starting point for TQFT-like constructions\footnote{TQFT here stands for Topological Quantum Field Theory.} based on the combinatorics of triangulated pseudo 3-manifolds similarly to the Ponzano--Regge and Turaev--Viro models~\cite{PonzanoRegge1968,MR1191386,KashaevLuoVartanov2012,AndersenKashaev2013}.

An interesting example of a function of the beta pentagon type over $\R$ and the reason why we use the term ``beta" is given by the Euler beta function 
\begin{equation}
B(x,y)=\Gamma(x)\Gamma(y)/\Gamma(x+y).
\end{equation}
Namely, the solution is given explicitly by the formula~\cite{KashaevLuoVartanov2012}
\begin{equation}
\phi_j(x,y)=B\left(2\pi\imi(x+y-\imi 0),-2\pi\imi (y+\imi 0)\right).
\end{equation}
This solution has not been used for topological applications so far, because it lacks the tetrahedral symmetries, but it would be interesting to see if the pentagon relation could have any significance, for example, for the Veneziano amplitude in string theory.
\subsection*{Acknowledgements}  This work is supported in part by Swiss National Science Foundation. The results were reported at the workshop ``Extended TFT, quantization of classical TFTs, 
higher structures, differential cohomology theories, ..." held at the Erwin Schr\"odinger Institute, Vienna, February 17--21, 2014 and the conference 
``Mathematical Physics: Past, Present, and Future" held at the
Euler International Mathematical Institute, St. Petersburg, March 26--30, 2014. I would like to thank the participants of the workshop and the conference for useful and helpful discussions, especially A.~Alekseev, I.~Arefeva, D.~Bar-Natan, S.~Derkachov, L.~Faddeev, A.~Isaev, N.~Reshetikhin, R.~Schrader, M.~Semenov-Tian-Shansky, V.~Spiridonov, V.~Tarasov, L.~Takhtajan, J.~Teschner, A.~Virelizier.

\section{Some symmetries of the beta pentagon relation}
 One immediate symmetry of the beta pentagon relation is given by the inversion of the group arguments. Namely, if $\phi_i(x,y)$ is of the beta pentagon type over $A$, then
 \begin{equation}
\phi'_i(x,y)\equiv \phi_i(\bar x,\bar y)
\end{equation}
is also of the beta pentagon type over  $A$.

Let $\widehat A$ be the Pontryagin dual of $A$. Interestingly, the beta pentagon relation is stable under the Fourier transformation in the sense that if
$\phi_i(x,y)$ is of the beta pentagon type over $A$, then
\begin{equation}\label{eq:ft}
\hat\phi_i(\xi,\eta)\equiv \int_{A^2}\frac{\xi(y)}{\eta(x)}\phi_i(x,y)dxdy,\quad \forall(\xi,\eta)\in\widehat A^2,
\end{equation}
is of the beta pentagon type over $\widehat{A}$. This is checked as follows. 

First, for any integrable function $f\in\C^{A^2}$, we define its partial Fourier transform $\check f\in\C^{\hat A\times A}$ by the formula
\begin{equation}
\check f(\xi, y)\equiv \int_A \bar\xi(x)f(x,y)dx,\quad \forall(\xi,y)\in\widehat A\times A,
\end{equation}
so that we have
\begin{equation}\label{eq:check}
f(x,y)=\int_{\widehat A}\xi(x)\check f(\xi,y)d\xi
\end{equation}
and
\begin{equation}\label{eq:hatcheck}
\hat f(\eta,\xi)=\int_{A}\eta(y)\check f(\xi,y)dy.
\end{equation}
Now, we write
\begin{multline}
\hat\phi_1(\xi,\eta)\hat\phi_3(\mu,\nu) =\int_{A^4}\frac{\xi(y)\mu(v)}{\eta(x)\nu(u)}\phi_1(x,y)\phi_3(u,v)dxdydudv\\ 
=\int_{A^5}\frac{\xi(y)\mu(v)}{\eta(x)\nu(u)}\phi_4(uy,v\bar z) 
 \phi_2(xyuv\bar z,z)  \phi_0(xv,y\bar z)dzdxdydudv\\
=\int_{A^5}\frac{\xi(yz)\mu(vz)}{\eta(x\bar v\bar z)\nu(u\bar y\bar z)}\phi_4(u,v) 
 \phi_2(xu\bar z,z)  \phi_0(x,y)dzdxdydudv
\end{multline}
where in the last equality we have changed the variables $x\mapsto x\bar v$, $u\mapsto u\bar y$ followed by $y\mapsto yz$, $v\mapsto vz$. By applying \eqref{eq:check} to $\phi_2$, then collecting all characters applied to $z$, and applying  \eqref{eq:hatcheck}, we continue
\begin{multline*}
 =\int_{\widehat A\times A^5}\frac{\xi(yz)\mu(vz)\sigma(xu\bar z)}{\eta(x\bar v\bar z)\nu(u\bar y\bar z)}\phi_4(u,v) 
 \check\phi_2(\sigma,z)  \phi_0(x,y)d\sigma dzdxdydudv\\
=\int_{A^5\times\widehat A}\frac{\xi(y)\mu(v)\sigma(xu)\xi\eta\mu\nu\bar\sigma(z)}{\eta(x\bar v)\nu(u\bar y)}\phi_4(u,v) 
 \check\phi_2(\sigma,z)  \phi_0(x,y)dz  dxdydudvd\sigma\\
 =\int_{A^4\times\widehat A}\frac{\xi\nu(y)\mu\eta(v)}{\eta\bar\sigma(x)\nu\bar\sigma(u)}\phi_4(u,v) 
 \hat\phi_2(\xi\eta\mu\nu\bar\sigma,\sigma)  \phi_0(x,y) dxdydudvd\sigma
\end{multline*}
where in the last equality we have collected all characters applied to $x,y,u,v$. Finally, applying \eqref{eq:ft} to $\phi_0$ and $\phi_4$, we end up with the integral
\[
 =\int_{\widehat A}\hat\phi_4(\mu\eta,\nu\bar\sigma) 
 \hat\phi_2(\xi\eta\mu\nu\bar\sigma,\sigma)  \hat\phi_0(\xi\nu,\eta\bar\sigma)d\sigma.
\]
\section{Functions of the automorphic beta pentagon type}\label{sec:aug}
Let $A$ be a locally compact abelian group, $B\subset A$  a subgroup, $g\in\widehat{B}^{[2]}$, and $h\colon A\to \widehat B$, $x\mapsto h_x$, a group homomorphism such that the map $\varepsilon\colon B\to\mathbb{T}$, $b\mapsto h_b(b)$ is a character. We say that a function $\phi\in\C^{[4]\times A^2}$ is of the \emph{automorphic beta pentagon type}  $(B,g,h)$ if it satisfies the beta pentagon relation~\eqref{eq00} and the automorphicity conditions
\begin{equation}\label{eq:automorph}
\phi_i(bx,y)=\gamma_ih_y(b)\phi_i(x,y),\quad \forall (i,b,x,y)\in[4]\times B\times A^2,
\end{equation}
where
\begin{equation}\label{eq:gammai}
\gamma_0=g(0),\ \gamma_1=g(0)g(1),\ \gamma_2=g(1),\ \gamma_3=g(1)g(2),\ \gamma_4=g(2).
\end{equation}
Our first main result is the following theorem.
\begin{theorem}\label{thm1}
 Let $\phi\in\C^{[4]\times A^2}$ be of the automorphic beta pentagon type  $(B,g,h)$. Then, for any characters $\alpha,\beta\in\widehat{B}$, the function $\psi\in\C^{[4]\times A^2}$ defined by
 \begin{equation}
\psi_i(x,y)=\int_B\phi_i(x,yb)\mu_ih_{bx}(b)db,\quad\forall (i,x,y)\in[4]\times A^2,
\end{equation}
where
\begin{equation}\label{eq:mui}
\mu_0=\alpha\gamma_3,\ \mu_1=\alpha,\ \mu_2=\alpha\beta\gamma_0\gamma_2\gamma_4,\ \mu_3=\beta,\ \mu_4=\beta\gamma_1,
\end{equation}
satisfies the relations
\begin{multline}\label{eq:autopsi}
\psi_i(bx,y)=\gamma_ih_y(b)\psi_i(x,y),\\
\psi_i(x,by)=\varepsilon \bar \mu_i h_{\bar x}(b)\psi_i(x,y),\quad\forall (i,b,x,y)\in[4]\times B\times A^2,
\end{multline}
\begin{equation}\label{eq11}
  \psi_1(x,y)\psi_3(u,v) =\int_{A/B}
 \psi_4(uy,v\bar z) 
 \psi_2(xyuv\bar z,z)  \psi_0(xv,y\bar z)dz.
\end{equation}
\end{theorem}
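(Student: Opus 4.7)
The three assertions are verified by direct computation, using three elementary tools: (i)~the antisymmetry identity $h_b(c)h_c(b)=1$ for $b,c\in B$, obtained by expanding $\varepsilon(bc)=h_{bc}(bc)$ bilinearly and comparing with $\varepsilon(b)\varepsilon(c)=h_b(b)h_c(c)$; (ii)~the homomorphism property $h_{xy}=h_xh_y$ of $h$; and (iii)~the identities $\gamma_1=\gamma_0\gamma_2$ and $\gamma_3=\gamma_2\gamma_4$ built into \eqref{eq:gammai}. For the first line of \eqref{eq:autopsi} I substitute the definition of $\psi_i$, apply \eqref{eq:automorph} to pull $b$ out of the first slot of $\phi_i$ under the $B$-integral, and note that the cross term $h_c(b)h_b(c)$ arising from $h_{c(bx)}(c)$ cancels by (i). For the second line I make the substitution $c\mapsto c\bar b$ in the defining integral; the $h$-factors from the shift simplify via (i), (ii), and the special value $h_{\bar b}(\bar b)=\varepsilon(b)$, producing exactly the prefactor $\varepsilon(b)\bar\mu_i(b)h_{\bar x}(b)$.

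For the pentagon relation \eqref{eq11} I expand the left-hand side via the definitions of $\psi_1$ and $\psi_3$ and apply \eqref{eq00} to the product $\phi_1(x,yp)\phi_3(u,vq)$, obtaining an integral over $(p,q,z)\in B^2\times A$ of $\phi_4(uyp,vq\bar z)\phi_2(xyuvpq\bar z,z)\phi_0(xvq,yp\bar z)$ against a character factor. Writing $z=z_0w$ with $z_0\in A/B$, $w\in B$ and using $\int_A dz=\int_{A/B}\int_B dz_0\,dw$ for compatible Haar measures, this becomes an integral over $A/B\times B^3$. Similarly, the right-hand side of \eqref{eq11}, expanded through the definitions of $\psi_0,\psi_2,\psi_4$, is an integral over $A/B\times B^3$ in variables $(r,s,t,z_0)$. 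The key step is the Haar-preserving bijection
\[
r=q\bar w,\qquad s=w,\qquad t=p\bar w,
\]
after which \eqref{eq:automorph}, applied to $\phi_4$, $\phi_0$, and $\phi_2$ to absorb the first-slot shifts $p$, $q$, and $pq\bar w\in B$ respectively, identifies the three $\phi$-factors on the two sides pointwise.

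What remains is a pure character identity on $B^3$. Expanding every $h$-factor by (ii) and cancelling via (i), it decouples into three scalar equalities of characters in $\widehat B$, namely $\mu_0=\gamma_2\gamma_4\mu_1$, $\mu_4=\gamma_0\gamma_2\mu_3$, and $\mu_2\gamma_2=\mu_0\mu_4$; by (iii) these read $\mu_0=\gamma_3\mu_1$, $\mu_4=\gamma_1\mu_3$, $\mu_2=\mu_0\mu_4\bar\gamma_2$, all visible from \eqref{eq:mui}. The main obstacle is purely bookkeeping: one must track, through many bilinear $h$-terms, which character is evaluated on which of the three $B$-variables. The particular form \eqref{eq:mui} of the $\mu_i$ is pinned down exactly by these three scalar constraints.
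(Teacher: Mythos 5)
Your proposal is correct: the three scalar constraints you extract, $\mu_0=\gamma_2\gamma_4\mu_1$, $\mu_4=\gamma_0\gamma_2\mu_3$ and $\mu_2=\mu_0\mu_4\bar\gamma_2$, are exactly the conditions that make the termwise matching work, and they are satisfied by \eqref{eq:mui} in view of \eqref{eq:gammai}; I have checked that under your substitution $r=q\bar w$, $s=w$, $t=p\bar w$ all cross terms $h_p(q)h_q(p)$, $h_{\bar w}(p)h_w(p)$, etc.\ do cancel via \eqref{eq:bbp} and the homomorphism property of $h$, and that the residual $h_{xyuv}\bar h_{uy}\bar h_{xv}=1$ disappears as needed. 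The substance of your argument is the same as the paper's: both rely on the pentagon relation \eqref{eq00} for $\phi$, the automorphicity \eqref{eq:automorph} to strip the $B$-shifts off the arguments of $\phi_0,\phi_2,\phi_4$, the antisymmetry \eqref{eq:bbp}, and Weil's decomposition of $\int_A$ into $\int_{A/B}\int_B$. The organization differs: the paper introduces the partial transform $\tilde\phi_i(x,y,\xi)=\int_B\xi(b)\phi_i(x,by)\,db$ and $\omega_i(x,y,\xi)=\tilde\phi_i(x,y,\varepsilon h_x\xi)$, proves a pentagon identity for $\omega$ with \emph{free} characters $\xi,\eta$ by sequentially absorbing the three $B$-integrals into $\tilde\phi_4,\tilde\phi_0,\tilde\phi_2$, and only at the end specializes $\xi=\alpha\varepsilon h_x$, $\eta=\beta\varepsilon h_u$, which is what forces \eqref{eq:mui}; you instead expand both sides of \eqref{eq11} fully over $A/B\times B^3$ and exhibit an explicit measure-preserving bijection matching them pointwise. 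Your version is more symmetric and makes the origin of the three constraints on the $\mu_i$ transparent; the paper's version isolates a reusable lemma (the transformation law for $\tilde f$) and yields the slightly stronger one-parameter family of identities before specialization. The only points worth making explicit in a write-up are the normalization of the Haar measures in Weil's formula and the verification (immediate from \eqref{eq:autopsi}) that the integrand on the right of \eqref{eq11} indeed descends to $A/B$.
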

Before proving the theorem, we start with few auxiliary facts.
\begin{lemma}
Given a locally compact abelian group $A$,   a subgroup $B\subset A$ , and $h\colon A\to \widehat B$, $x\mapsto h_x$, a group homomorphism. Then, the map $\varepsilon\colon B\to\mathbb{T}$, $b\mapsto h_b(b)$ is a group homomorphism, if and only if 
 \begin{equation}\label{eq:bbp}
h_b(c)h_{c}(b)=1,\quad \forall (b,c)\in B^2.
\end{equation}
\end{lemma}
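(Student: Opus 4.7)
The proof is a direct one-line manipulation, so the plan is simply to expand $\varepsilon(bc)$ in two different ways using the available homomorphism properties.

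First I would note the two algebraic ingredients that are in play. Since $h\colon A\to \widehat{B}$ is a group homomorphism, we have
\begin{equation*}
h_{bc}=h_b h_c \quad \text{in } \widehat{B},\qquad \forall (b,c)\in B^2,
\end{equation*}
that is, $h_{bc}(d)=h_b(d)h_c(d)$ for all $d\in B$. Moreover, each $h_b$ is itself a character of $B$, so
\begin{equation*}
h_b(cd)=h_b(c)h_b(d), \qquad \forall (b,c,d)\in B^3.
\end{equation*}

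Then I would compute $\varepsilon(bc)=h_{bc}(bc)$ by applying both properties:
\begin{equation*}
\varepsilon(bc)=h_{bc}(bc)=h_b(bc)h_c(bc)=h_b(b)h_b(c)h_c(b)h_c(c)=\varepsilon(b)\varepsilon(c)\,h_b(c)h_c(b).
\end{equation*}
Therefore $\varepsilon(bc)=\varepsilon(b)\varepsilon(c)$ for all $b,c\in B$ if and only if $h_b(c)h_c(b)=1$ for all $b,c\in B$, which is precisely \eqref{eq:bbp}.

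There is really no obstacle here; the content of the lemma is exactly the observation that the ``cross terms'' produced by the bilinear expansion of $h_{bc}(bc)$ are what measures the failure of $\varepsilon$ to be multiplicative. The only very minor point to be careful about is that the computation uses the values of $h$ only on the subgroup $B$, so the hypothesis that $h$ is a homomorphism \emph{of $A$} is stronger than needed; the restriction $h|_B\colon B\to\widehat B$ being a homomorphism is enough, and this is what the calculation actually uses.
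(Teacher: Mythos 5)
Your proof is correct and is essentially identical to the paper's: both expand $\varepsilon(bc)=h_{bc}(bc)$ bilinearly (you split the subscript first, the paper splits the argument first, which makes no difference) and read off that the cross terms $h_b(c)h_c(b)$ measure the failure of multiplicativity. Your closing remark that only the restriction $h|_B$ is used is accurate but not needed for the lemma.
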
 
\begin{proof} We have 
\begin{equation}
 \varepsilon (bc)=h_{bc}(bc)=h_{bc}(b)h_{bc}(c)=h_{b}(b)h_{c}(b)h_{b}(c)h_{c}(c)=\varepsilon(b)\varepsilon(c)h_{c}(b)h_{b}(c)
\end{equation}
so that
\begin{equation}
 \varepsilon (bc)=\varepsilon(b)\varepsilon(c)\Leftrightarrow h_{c}(b)h_{b}(c)=1.
\end{equation}
\end{proof}
\begin{lemma}\label{lemma2}
 Given a locally compact abelian group $A$, a subgroup $B\subset A$, a character $\gamma\in\widehat{B}$, and a group homomorphism $h\colon A\to\widehat{B}$, $x\mapsto h_x$, such that \eqref{eq:bbp} is satisfied.
Let $f\in\C^{A^2}$ be such that
 \begin{equation}
f(bx,y)=\gamma h_y(b)f(x,y),\quad \forall (b,x,y)\in B\times A^2.
\end{equation}
Then, the function $\tilde f\in\C^{A^2\times\widehat B}$ defined by
\begin{equation}\label{eq:tildet}
\tilde f(x,y,\xi)=\int_B \xi(b)f(x,by)db
\end{equation}
satisfies the relations
\begin{multline}
\tilde f(bx,y,\xi)=\gamma h_{y}(b)\tilde f(x,y,\xi h_{\bar b}),\\
\tilde f(x,by,\xi)=\bar\xi(b)\tilde f(x,y,\xi),\quad \forall (b,x,y,\xi)\in B\times A^2\times\widehat B.
\end{multline}
\end{lemma}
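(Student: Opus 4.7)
The plan is to verify the two automorphicity relations for $\tilde f$ directly from the definition \eqref{eq:tildet}, using the invariance of the Haar measure on $B$ for the second relation and the bilinear alternation \eqref{eq:bbp} for the first.

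For the second relation, I would start with $\tilde f(x,by,\xi)=\int_B\xi(c)f(x,cby)\,dc$ and perform the change of variables $c\mapsto c\bar b$ in $B$. Since $db$ is a Haar measure this substitution is measure-preserving, and using multiplicativity of the character $\xi$ I extract $\bar\xi(b)$ from the integrand, obtaining $\bar\xi(b)\tilde f(x,y,\xi)$. This step is routine.

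The first relation is the more interesting one, but still short. Starting from
\[
\tilde f(bx,y,\xi)=\int_B\xi(c)\,f(bx,cy)\,dc,
\]
I apply the hypothesis on $f$ in its first argument to get $f(bx,cy)=\gamma h_{cy}(b)f(x,cy)$, and then split $h_{cy}(b)=h_c(b)h_y(b)$ since $h$ takes values in the (abelian) group $\widehat B$. Pulling the $b$-dependent factor $\gamma h_y(b)$ outside the integral leaves $\int_B \xi(c)h_c(b)f(x,cy)\,dc$. Here is the key trick: by the assumption \eqref{eq:bbp}, which is available because the hypothesis on $h$ (together with the previous lemma) guarantees $h_b(c)h_c(b)=1$, we rewrite $h_c(b)=h_{\bar b}(c)$, so that the integrand becomes $(\xi h_{\bar b})(c)f(x,cy)$. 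Recognizing the result as $\tilde f(x,y,\xi h_{\bar b})$ finishes the computation.

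I do not expect any real obstacle; the main subtlety is simply to notice that the asymmetry between the transformations in the $x$- and $y$-arguments in the conclusion comes precisely from using \eqref{eq:bbp} to transfer the $b$-dependence from the argument of $h$ to the argument of $\xi$. No analytic issue arises beyond standard manipulations under the Haar integral, since we are working formally with functions for which the integrals in \eqref{eq:tildet} are assumed to make sense.
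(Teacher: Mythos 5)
Your proposal is correct and follows essentially the same computation as the paper: apply the automorphicity of $f$ in the first argument, split $h_{cy}(b)=h_y(b)h_c(b)$, and use \eqref{eq:bbp} to rewrite $h_c(b)=h_{\bar b}(c)$ for the first relation, and a Haar-measure shift for the second. (Note only that \eqref{eq:bbp} is assumed directly in the hypotheses of the lemma, so no appeal to the previous lemma is needed.)
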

\begin{proof} We have
 \begin{multline}
\tilde f(bx,y,\xi)=\int_B\xi(b')f(bx,b'y)db'=\int_B\xi(b')\gamma h_{b'y}(b)f(x,b'y)db'\\
=\gamma h_{y}(b)\int_B\xi(b') h_{b'}(b)f(x,b'y)db'=\gamma h_{y}(b)\int_B\xi(b') h_{\bar b}(b')f(x,b'y)db'\\
=\gamma h_{y}(b)\tilde f(x,y,\xi h_{\bar b})
\end{multline}
and
\begin{equation}
\tilde f(x,by,\xi)=\int_B\xi(b')f(x,b'by)db'=\int_B\xi(b'\bar b)f(x,b'y)db'=\bar\xi(b)f(x,y,\xi).
\end{equation}
\end{proof}
\begin{proof}[Proof of Theorem~\ref{thm1}]
We start by remarking that
\begin{equation}\label{eq:psiom}
\psi_i(x,y)=\omega_i(x,y,\mu_i),
\end{equation}
where
\begin{equation}
\omega_i(x,y,\xi)\equiv\tilde\phi_i(x,y,\varepsilon h_x\xi),\quad \forall (i,x,y,\xi)\in[4]\times A^2\times\widehat B,
\end{equation}
so that, by using Lemma~\ref{lemma2}, we verify \eqref{eq:autopsi}:
\begin{multline*}
\psi_i(bx,y)=\omega_i(bx,y,\mu_i)=\tilde\phi_i(bx,y,\varepsilon h_{bx}\mu_i)\\
=\gamma_ih_y(b)\tilde\phi_i(x,y,\varepsilon h_{x}\mu_i)
=\gamma_ih_y(b)\omega_i(x,y,\mu_i)=\gamma_ih_y(b)\psi_i(x,y)
\end{multline*}
and
\begin{multline*}
\psi_i(x,by)=\omega_i(x,by,\mu_i)=\tilde\phi_i(x,by,\varepsilon h_{x}\mu_i)\\
=\varepsilon \bar \mu_i h_{\bar x}(b)\tilde\phi_i(x,y,\varepsilon h_{x}\mu_i)
=\varepsilon \bar \mu_i h_{\bar x}(b)\omega_i(x,y,\mu_i)=\varepsilon \bar \mu_i h_{\bar x}(b)\psi_i(x,y).
\end{multline*}
To verify \eqref{eq11}, we proceed as follows:
\begin{multline*}\omega_1(x,y,\xi \varepsilon h_{\bar x})\omega_3(u,v,\eta \varepsilon h_{\bar u})\\
=
 \tilde\phi_1(x,y,\xi)\tilde\phi_3(u,v,\eta)=\int_{B^2}\xi(b)\eta(c)\phi_1(x,by)\phi_3(u,cv)dbdc\\
= \int_{A\times B^2}\xi(b)\eta(c)\phi_4(uby,cv\bar z) 
 \phi_2(xbyucv\bar z,z)  \phi_0(xcv,by\bar z)dzdbdc\\
  \end{multline*}
 where in the last equality we have used the beta pentagon relation for $\phi$. We continue by applying \eqref{eq:automorph} to $\phi_4$ and $\phi_0$
 \begin{multline*}
 = \int_{A\times B^2}\xi(b)\eta(c)\gamma_4h_{cv\bar z}(b)\gamma_0h_{by\bar z}(c)\phi_4(uy,cv\bar z) 
 \phi_2(bcxyuv\bar z,z)  \phi_0(xv,by\bar z)dzdbdc\\
 = \int_{A\times B^2}\xi\gamma_4h_{v\bar z}(b)\eta\gamma_0h_{y\bar z}(c)\phi_4(uy,cv\bar z) 
 \phi_2(bcxyuv\bar z,z)  \phi_0(xv,by\bar z)dzdbdc\\
 \end{multline*}
 where we have used \eqref{eq:bbp}. We continue by applying \eqref{eq:automorph} to $\phi_2$
  \begin{multline*}
  = \int_{A\times B^2}\xi\gamma_4h_{v\bar z}(b)\eta\gamma_0h_{y\bar z}(c)\gamma_2h_z(bc)\phi_4(uy,cv\bar z) 
 \phi_2(xyuv\bar z,z)  \phi_0(xv,by\bar z)dzdbdc\\
  = \int_{A\times B^2}\xi\gamma_4\gamma_2h_{v}(b)\eta\gamma_0\gamma_2h_{y}(c)\phi_4(uy,cv\bar z) 
 \phi_2(xyuv\bar z,z)  \phi_0(xv,by\bar z)dzdbdc\\
  = \int_{A}\tilde\phi_4(uy,v\bar z,\eta\gamma_1h_{y}) 
 \phi_2(xyuv\bar z,z)  \tilde\phi_0(xv,y\bar z,\xi\gamma_3h_{v})dz\\
 \end{multline*}
 where in the last equality we have used \eqref{eq:gammai} and the transformation~\eqref{eq:tildet}. We continue by splitting the integral over $A$ into a double integral over $B$ and  $A/B$:
  \begin{multline*}
 = \int_{B\times A/B}\tilde\phi_4(uy,v\bar z\bar b,\eta\gamma_1h_{y}) 
 \phi_2(xyuv\bar z\bar b,bz)  \tilde\phi_0(xv,y\bar z\bar b,\xi\gamma_3h_{v})dbdz\\
 = \int_{B\times A/B}\xi\eta\gamma_1\bar\gamma_2\gamma_3\varepsilon h_{yv\bar z}(b)\tilde\phi_4(uy,v\bar z,\eta\gamma_1h_{y}) 
 \phi_2(xyuv\bar z,bz)  \tilde\phi_0(xv,y\bar z,\xi\gamma_3h_{v})dbdz\\
  \end{multline*}
 where  we have used Lemma~\ref{lemma2} and \eqref{eq:automorph}. We end up the calculation by absorbing the integral over $B$ by using the definition~\eqref{eq:tildet}:
 \begin{multline*}
 = \int_{A/B}\tilde\phi_4(uy,v\bar z,\eta\gamma_1h_{y}) 
 \tilde\phi_2(xyuv\bar z,z,\xi\eta\gamma_1\bar\gamma_2\gamma_3\varepsilon h_{yv\bar z})  \tilde\phi_0(xv,y\bar z,\xi\gamma_3h_{v})dz\\
 = \int_{A/B}\omega_4(uy,v\bar z,\varepsilon\eta\gamma_1h_{\bar u}) 
 \omega_2(xyuv\bar z,z,\xi\eta\gamma_1\bar\gamma_2\gamma_3h_{\bar x\bar u})  \omega_0(xv,y\bar z,\varepsilon\xi\gamma_3h_{\bar x})dz.
\end{multline*}
The obtained equality, by substitutions $\xi=\alpha\varepsilon h_x$ and $\eta=\beta\varepsilon h_u$ takes the form
\begin{multline}
 \omega_1(x,y,\alpha)\omega_3(u,v,\beta)\\=
 \int_{A/B}\omega_4(uy,v\bar z,\beta\gamma_1) 
 \omega_2(xyuv\bar z,z,\alpha\beta\gamma_1\bar\gamma_2\gamma_3)  \omega_0(xv,y\bar z,\alpha\gamma_3)dz
\end{multline}
which, by using \eqref{eq:mui} and \eqref{eq:psiom}, is equivalent to \eqref{eq11}.
\end{proof}
\section{Functions of the Faddeev type} 
A function
\begin{equation}
f\colon[4]\times\R\to\C,\quad (i,x)\mapsto f_i(x),
\end{equation}
is called of \emph{the Faddeev type} if it satisfies the operator relation
\begin{equation}\label{eq35}
f_1(\mathsf{p})f_3(\mathsf{q})=f_4(\mathsf{q})f_2(\mathsf{p}+\mathsf{q})f_0(\mathsf{p})
\end{equation}
where $\mathsf{p}$ and $\mathsf{q}$ are self-adjoint operators in a Hilbert space satisfying Heisenberg's commutation relation
\begin{equation}
\mathsf{p}\mathsf{q}-\mathsf{q}\mathsf{p}=(2\pi\imi )^{-1}.
\end{equation}
\begin{lemma}
 A square integrable function $f$ is of the Faddeev type if and only if
 \begin{equation}\label{eq50}
\tilde f_1(x)\tilde f_3(y)=e^{-2\pi\imi xy}\int_\R \tilde f_4(y-z)\tilde f_2(z)\tilde f_0(x-z)e^{\pi\imi  z^2}dz,\quad\forall (x,y)\in\R^2,
\end{equation}
where
\begin{equation}
\tilde f(x)\equiv \int_\R e^{-2\pi\imi xy}f(y)dy.
\end{equation}
\end{lemma}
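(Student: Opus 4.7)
The plan is to transport the operator identity \eqref{eq35} into an identity between scalar functions by Fourier-representing every factor and reducing everything to ordered Weyl exponentials $e^{2\pi\imi x\mathsf{p}}e^{2\pi\imi y\mathsf{q}}$. Concretely, using the functional calculus and the Fourier inversion $f_i(t)=\int_\R e^{2\pi\imi tx}\tilde f_i(x)\,dx$, I write
\[
f_i(\mathsf{p})=\int_\R \tilde f_i(x)\,e^{2\pi\imi x\mathsf{p}}\,dx,\qquad f_i(\mathsf{q})=\int_\R \tilde f_i(y)\,e^{2\pi\imi y\mathsf{q}}\,dy,
\]
so that the left-hand side of \eqref{eq35} becomes
\(
\int_{\R^2}\tilde f_1(x)\tilde f_3(y)\,e^{2\pi\imi x\mathsf{p}}e^{2\pi\imi y\mathsf{q}}\,dxdy.
\)

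First I would process the right-hand side. Since $[\mathsf{p},\mathsf{q}]=(2\pi\imi)^{-1}$ is central, Baker--Campbell--Hausdorff gives the two Weyl identities
\[
e^{2\pi\imi z(\mathsf{p}+\mathsf{q})}=e^{\pi\imi z^2}e^{2\pi\imi z\mathsf{q}}e^{2\pi\imi z\mathsf{p}},\qquad
e^{2\pi\imi b\mathsf{q}}e^{2\pi\imi a\mathsf{p}}=e^{-2\pi\imi ab}e^{2\pi\imi a\mathsf{p}}e^{2\pi\imi b\mathsf{q}}.
\]
Substituting the Fourier representations of $f_4,f_2,f_0$ into $f_4(\mathsf{q})f_2(\mathsf{p}+\mathsf{q})f_0(\mathsf{p})$, applying the first identity to split the middle factor, collecting the $\mathsf{q}$-exponentials on the left and the $\mathsf{p}$-exponentials on the right, and then using the second identity once to bring the $\mathsf{q}$-block past the $\mathsf{p}$-block, the right-hand side becomes a triple integral of the form
\[
\int_{\R^3}\tilde f_4(y')\tilde f_2(z)\tilde f_0(x')\,e^{\pi\imi z^2}e^{-2\pi\imi(z+x')(z+y')}\,e^{2\pi\imi(z+x')\mathsf{p}}e^{2\pi\imi(z+y')\mathsf{q}}\,dx'dy'dz.
\]

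Next I would perform the change of variables $x=z+x'$, $y=z+y'$ (Jacobian $1$, $z$ fixed), turning the expression into
\[
\int_{\R^2}\!\left(e^{-2\pi\imi xy}\!\int_\R\!\tilde f_4(y-z)\tilde f_2(z)\tilde f_0(x-z)e^{\pi\imi z^2}dz\right)e^{2\pi\imi x\mathsf{p}}e^{2\pi\imi y\mathsf{q}}\,dxdy.
\]
Thus \eqref{eq35} is equivalent to the equality of the two operator-valued Weyl expansions, the LHS having symbol $\tilde f_1(x)\tilde f_3(y)$ and the RHS having the bracketed symbol above.

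The final step is to argue that equality of such Weyl expansions is equivalent to pointwise (a.e.) equality of the symbols; this is the only genuine obstacle, since the previous manipulations are formal. I would invoke the Stone--von Neumann theorem, which lets me pass to the Schr\"odinger representation on $L^2(\R)$, where $\{e^{2\pi\imi x\mathsf{p}}e^{2\pi\imi y\mathsf{q}}\}_{(x,y)\in\R^2}$ is the standard Weyl system; the square-integrability hypothesis on $f$ ensures that both symbols are in $L^2(\R^2)$, and the Weyl symbol map (equivalently, the Plancherel isometry for the Heisenberg group) is injective on $L^2(\R^2)$. Comparing symbols then yields exactly \eqref{eq50}, and the argument is reversible: starting from \eqref{eq50}, integrating against $e^{2\pi\imi x\mathsf{p}}e^{2\pi\imi y\mathsf{q}}$ and running the Weyl identities backwards recovers \eqref{eq35}.
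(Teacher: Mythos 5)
Your proof is correct and follows essentially the same route as the paper: Fourier-represent each operator factor, normal-order the Weyl exponentials via the Baker--Campbell--Hausdorff identities, shift variables, and compare coefficients (you merely order $\mathsf{p}$-exponentials to the left where the paper puts $\mathsf{q}$-exponentials to the left, which just moves the phase $e^{\pm 2\pi\imi xy}$ from one side to the other). The only substantive difference is that you explicitly justify the final ``compare coefficients'' step via Stone--von Neumann and the injectivity of the Weyl symbol map, which the paper simply asserts.
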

\begin{proof}
By using the inverse Fourier transformation, equality~\eqref{eq35} takes the form of an operator valued  integral equality
\begin{multline}
\int_{\R^2}\tilde f_1(x)\tilde f_3(y)e^{2\pi\imi  x\mathsf{p}}e^{2\pi\imi  y\mathsf{q}}dxdy\\=\int_{\R^3}\tilde f_4(y)\tilde f_2(z)\tilde f_0(x)e^{2\pi\imi  y\mathsf{q}}e^{2\pi\imi  z(\mathsf{p}+\mathsf{q})}e^{2\pi\imi  x\mathsf{p}}dxdydz
\end{multline}
which, by using the operator equalities
\begin{equation}
e^{2\pi\imi  x\mathsf{p}}e^{2\pi\imi  y\mathsf{q}}=e^{(2\pi\imi )^2 xy[\mathsf{p},\mathsf{q}]}e^{2\pi\imi  y\mathsf{q}}e^{2\pi\imi  x\mathsf{p}}
=e^{2\pi\imi xy}e^{2\pi\imi  y\mathsf{q}}e^{2\pi\imi  x\mathsf{p}}
\end{equation}
and
\begin{equation}
e^{2\pi\imi  z(\mathsf{p}+\mathsf{q})}=e^{\frac12(2\pi\imi )^2z^2[\mathsf{p},\mathsf{q}]}e^{2\pi\imi  z\mathsf{q}}e^{2\pi\imi  z\mathsf{p}}=e^{\pi\imi  z^2}
e^{2\pi\imi  z\mathsf{q}}e^{2\pi\imi  z\mathsf{p}},
\end{equation}
takes the form
\begin{multline}\label{eq40}
\int_{\R^2}\tilde f_1(x)\tilde f_3(y)e^{2\pi\imi xy}e^{2\pi\imi  y\mathsf{q}}e^{2\pi\imi  x\mathsf{p}}
dxdy\\=\int_{\R^3}\tilde f_4(y)\tilde f_2(z)\tilde f_0(x)e^{\pi\imi  z^2}e^{2\pi\imi  (y+z)\mathsf{q}}e^{2\pi\imi  (x+z)\mathsf{p}}dxdydz\\
=\int_{\R^3}\tilde f_4(y-z)\tilde f_2(z)\tilde f_0(x-z)e^{\pi\imi  z^2}e^{2\pi\imi  y\mathsf{q}}e^{2\pi\imi  x\mathsf{p}}dxdydz.
\end{multline}
Comparing the coefficients of the operators $e^{2\pi\imi  y\mathsf{q}}e^{2\pi\imi  x\mathsf{p}}$, we come to the conclusion that  equality~\eqref{eq40} is true if and only if equality~\eqref{eq50} is true.
\end{proof}
Taking the complex conjugate of \eqref{eq50}, we also have
\begin{equation}\label{eq60}
\bar{\tilde f}_1(x)\bar{\tilde f}_3(y)
=e^{2\pi\imi xy}\int_\R \bar{\tilde f}_4(y-z)\bar{\tilde f}_2(z)\bar{\tilde f}_0(x-z)e^{-\pi\imi  z^2}dz,\quad\forall (x,y)\in\R^2.
\end{equation}
\begin{remark}
 If $f_i(x)$ is a function of the Faddeev type, then the function $g_i(x)=f_i(-x)$ is also of the Faddeev type.
\end{remark}
\begin{example}
 The constant unit function 
 \begin{equation}
f_i(x)=1,\quad \forall i\in[4],
\end{equation}
is trivially of the Faddeev type.
\end{example}
\begin{example}
 The Gaussian exponentials
 \begin{equation}
f_j(x)=a_je^{-b_jx^2},\quad \forall j\in[4],\quad a\in\R^{[4]},\quad b\in\R^{[4]}_{>0},
\end{equation}
form a function of the Faddeev type with an appropriate choice of the constants $a_j$ and $b_j$.
\end{example}
\begin{example}
 A non-trivial and interesting example of a function of the Faddeev type is given by Faddeev's quantum dilogarithm \cite{MR1345554}:
 \begin{equation}\label{eq10}
f_j(x)=\Phi_\hbar(x)\equiv\exp\left(\int_{\R+\imi\epsilon}\frac{e^{-\imi 2xz}}{4\sinh(zb)\sinh(zb^{-1})}\frac{dz}z\right), \quad \forall j\in[4],
\end{equation}
where $\hbar\in\R_{>0}$, $b$ is any root of the equation 
\begin{equation}
(b+b^{-1})^{-2}=\hbar,
\end{equation}
and $\epsilon$ is an arbitrary sufficiently small positive real. It is convenient to choose the unique $b$ such that $0\le \arg b<\pi/2$, and $\sqrt{\hbar}=(b+b^{-1})^{-1}>0$.

The integral in \eqref{eq10} is absolutely convergent, and it admits analytic continuation to complex $x$ with $|\Im x|<\frac1{2\sqrt{\hbar}}$. In the case when $\arg b>0$ (i.e. $\hbar>\frac14$) one can show that
\begin{equation}\label{eq20}
\Phi_\hbar(x)=
\frac{(-qe^{2\pi b x};q^2)_\infty}{
(-\bar qe^{2\pi b^{-1}x};\bar q^2)_\infty},\quad
q:= e^{\pi\imi b^2},\ 
\bar q:= e^{-\pi\imi b^{-2}},
\end{equation}
where we use the standard notation of the theory of basic hypergeometric series
\begin{equation}
 (x;q)_\infty\equiv\prod_{n=0}^\infty(1-xq^n),\quad |q|<1.
\end{equation}
Equation~\eqref{eq20}, can be used to analytically continue the definition of $\Phi_\hbar(x)$ to the entire complex plane. It is straightforward to see that it satisfies the functional equations
\begin{equation}
\Phi_\hbar(x-\imi b^{\pm1}/2)=(1+e^{2\pi b^{\pm1}x })\Phi_\hbar(x+\imi b^{\pm1}/2).
\end{equation}
One has also the inversion relation
\begin{equation}\label{eq15}
\Phi_\hbar(x)\Phi_\hbar(-x)=e^{\pi \imi x^2}e^{-\pi\imi (2+\hbar^{-1})/12}.
\end{equation}
Faddeev's quantum dilogarithm is closely related to Shintani's double sine function \cite{MR0460283,MR1105522,Barnes1904}, but the pentagon identity~\eqref{eq35} seems not to be known before Faddeev's paper \cite{Faddeev1994}. For further properties of Faddeev's quantum dilogarithm see, for example, \cite{MR1828812,MR2171695,MR1770545,MR2952777}.
\end{example}
Our second main result is the following theorem.
\begin{theorem}\label{theorem}
 Let $f$ and $g$ be two functions of the Faddeev type. Then the function 
 \begin{equation}\label{eq70}
\varphi\colon [4]\times\R^2\to\C,\quad
(j,x,y)\mapsto \varphi_j(x,y)=\int_\R e^{2\pi\imi  yt}f_j\left(t+\frac x2\right)\bar{g}_j\left(t-\frac x2\right)dt
\end{equation}
is of the beta pentagon type over $\R$, i.e. it satisfies relation~\eqref{eq00} with $A=\R$.
\end{theorem}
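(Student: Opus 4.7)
The plan is to reduce the beta pentagon relation for $\varphi$ to the Faddeev-pentagon identity~\eqref{eq50} for $f$ together with its complex conjugate~\eqref{eq60} applied to $g$, by passing to the Fourier picture. Inserting the inverse Fourier representations of $f_j$ and $\bar g_j$ into~\eqref{eq70} and integrating out $t$ (which yields a factor $\delta(y+\alpha-\beta)$) gives the symmetric formula
$$\varphi_j(x,y)=e^{\pi\imi xy}\int_\R e^{2\pi\imi x\alpha}\,\tilde f_j(\alpha)\,\bar{\tilde g}_j(\alpha+y)\,d\alpha.$$
This representation cleanly separates the $f$- and $g$-dependence into two factors sharing a single integration variable, which is what makes the Faddeev-pentagon substitutions tractable.

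Next I expand both sides of the desired identity~\eqref{eq00} (with $A=\R$) using this formula. On the left-hand side $\varphi_1(x,y)\varphi_3(u,v)$ one has a double integral over $\alpha,\beta$; apply~\eqref{eq50} to the product $\tilde f_1(\alpha)\tilde f_3(\beta)$, introducing a dummy $c$ with weight $e^{\pi\imi c^2}$, and apply~\eqref{eq60} to $\bar{\tilde g}_1(\alpha+y)\bar{\tilde g}_3(\beta+v)$, introducing $c'$ with weight $e^{-\pi\imi c'^2}$. On the right-hand side $\int\varphi_4(u+y,v-z)\varphi_2(x+y+u+v-z,z)\varphi_0(x+v,y-z)\,dz$ the same Fourier formula introduces dummies $\alpha_4,\alpha_2,\alpha_0$ alongside $z$. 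Both sides are now quadruple integrals whose integrands are products of three $\tilde f_i$'s, three $\bar{\tilde g}_i$'s and an explicit exponential phase.

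The match is then achieved by the substitution $\alpha_2=c$, $\alpha_0=\alpha-c$, $\alpha_4=\beta-c$, $z=c'-c$ on the left-hand side (which has unit Jacobian). Under it, the arguments of every $\tilde f_i$ and $\bar{\tilde g}_i$ on the left coincide with those on the right term by term; and since $c^2-c'^2=-z^2-2z\alpha_2$, the two Faddeev quadratic weights combine to an overall $e^{-\pi\imi z^2}$ together with a linear contribution that gets absorbed into the coefficient of $\alpha_2$. The principal obstacle is the ensuing phase bookkeeping: one must collate the prefactors $e^{\pi\imi xy}$, $e^{\pi\imi uv}$ from $\varphi_1,\varphi_3$, the bilinear phases $e^{-2\pi\imi\alpha\beta}$ and $e^{2\pi\imi(\alpha+y)(\beta+v)}$ coming from~\eqref{eq50} and~\eqref{eq60}, and the quadratic $e^{\pi\imi(c^2-c'^2)}$, and compare them against the three prefactors $e^{\pi\imi XY}$ produced by $\varphi_4,\varphi_2,\varphi_0$ on the right. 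A direct expansion shows that both sides carry the same total phase
$$\exp\!\Bigl\{\pi\imi(xy+uv)+2\pi\imi yv-\pi\imi z^2+2\pi\imi\bigl[\alpha_0(x+v)+\alpha_2(x+y+u+v-z)+\alpha_4(u+y)\bigr]\Bigr\},$$
which completes the verification.
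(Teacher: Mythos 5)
Your proposal is correct and follows essentially the same route as the paper: your ``symmetric formula'' is exactly the paper's identity~\eqref{eq75} after the shift $t\mapsto\alpha+y/2$, and the reduction to~\eqref{eq50} for $f$ and~\eqref{eq60} for $g$ is the same key step. The only cosmetic difference is that you expand both sides and match them under one unit-Jacobian substitution (whose phase bookkeeping I have checked and which does work out), whereas the paper transforms the left-hand side into the right-hand side by successive shifts of the integration variables.
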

It is instructive to give an operator interpretation for the formula in \eqref{eq70}. If we define the Fourier  operator $\mathsf{F}$ by
\begin{equation}
(\mathsf{F}f)(x)=\int_\R e^{2\pi\imi  xy}f(y)dy,
\end{equation}
which is unitary in $L^2(\R)$ due to the Fourier inversion formula,
we have the relations
\begin{equation}\label{eq72}
\mathsf{p}\mathsf{F}=\mathsf{F}\mathsf{q},\quad \mathsf{q}\mathsf{F}=-\mathsf{F}\mathsf{p},
\end{equation}
where
\begin{equation}
\mathsf{p}f(x)=\frac{1}{2\pi\imi }\frac{\partial f(x)}{\partial x},\quad \mathsf{q}f(x)=xf(x).
\end{equation}
By using Dirac's bra-ket notation for the scalar product in $L^2(\R)$:
\begin{equation}
\langle f\vert g\rangle\equiv\int_\R \bar f(x) g(x)dx,
\end{equation}
for any $(x,y)\in\R^2$, we have
\begin{multline}\label{eq73}
\varphi_j(x,y)\equiv\int_\R e^{2\pi\imi  yt}f_j\left(t+\frac x2\right)\bar g_j\left(t-\frac x2\right)dt\\
=\int_{\R}\overline{\left(e^{-\pi\imi x\mathsf{p}}g_j\right)}(t)e^{2\pi\imi  yt}\left( e^{\pi\imi x \mathsf{p}}f_j\right)(t)dt
=\int_{\R}\overline{\left(e^{-\pi\imi x\mathsf{p}}g_j\right)}(t)\left( e^{2\pi\imi  y\mathsf{q}}e^{\pi\imi x \mathsf{p}}f_j\right)(t)dt\\
=\langle e^{-\pi\imi x\mathsf{p}}g_j\vert e^{2\pi\imi y\mathsf{q}}e^{\pi\imi x\mathsf{p}}f_j\rangle
=\langle g_j\vert e^{\pi\imi x\mathsf{p}}e^{2\pi\imi y\mathsf{q}}e^{\pi\imi x\mathsf{p}}f_j\rangle
=\langle g_j\vert e^{2\pi\imi (x\mathsf{p}+y\mathsf{q})}f_j\rangle.
\end{multline}
This formula allows us to easily prove the following lemma.
\begin{lemma}
Let $\varphi_j(x,y)$ be defined as in \eqref{eq70}. Then the following equality holds true
 \begin{equation}\label{eq75}
\varphi_j(x,y)=\int_\R e^{2\pi\imi  xt}\tilde f_j\left(t-\frac y2\right)\bar{\tilde g}_j\left(t+\frac y2\right)dt,
\end{equation}
where 
\begin{equation}
\tilde f\equiv\mathsf{F}^{-1}f,\quad \forall f\in L^2(\R).
\end{equation}
\end{lemma}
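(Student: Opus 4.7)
The plan is to exploit the bra–ket representation \eqref{eq73} already derived for $\varphi_j(x,y)$ and then conjugate by the Fourier operator $\mathsf{F}$, using the commutation relations \eqref{eq72}.

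First, I start from the identity
\[
\varphi_j(x,y)=\langle g_j\vert e^{2\pi\imi (x\mathsf{p}+y\mathsf{q})}f_j\rangle
\]
obtained at the end of \eqref{eq73}. Since $\mathsf{F}$ is unitary on $L^2(\R)$, I have $\langle u\vert v\rangle=\langle \mathsf{F}^{-1}u\vert \mathsf{F}^{-1}v\rangle$ and more generally
\[
\langle g_j\vert e^{2\pi\imi (x\mathsf{p}+y\mathsf{q})}f_j\rangle
=\langle \tilde g_j\vert \mathsf{F}^{-1}e^{2\pi\imi (x\mathsf{p}+y\mathsf{q})}\mathsf{F}\,\tilde f_j\rangle.
\]

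Next, the relations \eqref{eq72} give $\mathsf{F}^{-1}\mathsf{p}\mathsf{F}=\mathsf{q}$ and $\mathsf{F}^{-1}\mathsf{q}\mathsf{F}=-\mathsf{p}$, hence
\[
\mathsf{F}^{-1}(x\mathsf{p}+y\mathsf{q})\mathsf{F}=x\mathsf{q}-y\mathsf{p}=(-y)\mathsf{p}+x\mathsf{q},
\]
which, since conjugation by $\mathsf F$ preserves the exponential series, yields
\[
\varphi_j(x,y)=\langle \tilde g_j\vert e^{2\pi\imi ((-y)\mathsf{p}+x\mathsf{q})}\tilde f_j\rangle.
\]

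Finally, I apply formula \eqref{eq73} a second time, but now read from right to left: with the substitutions $x\mapsto -y$, $y\mapsto x$, $f_j\mapsto\tilde f_j$, $g_j\mapsto\tilde g_j$, the bra–ket on the right equals
\[
\int_\R e^{2\pi\imi xt}\tilde f_j\!\left(t+\tfrac{-y}{2}\right)\bar{\tilde g}_j\!\left(t-\tfrac{-y}{2}\right)dt
=\int_\R e^{2\pi\imi xt}\tilde f_j\!\left(t-\tfrac{y}{2}\right)\bar{\tilde g}_j\!\left(t+\tfrac{y}{2}\right)dt,
\]
which is exactly \eqref{eq75}.

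There is no real obstacle here; the only thing to keep straight is the sign coming from $\mathsf{F}^{-1}\mathsf{q}\mathsf{F}=-\mathsf{p}$, which is precisely what produces the swap of the arguments $t\pm x/2$ into $t\mp y/2$ in the integrand. One could equivalently avoid the operator language and prove the identity by writing $\tilde f_j$, $\tilde g_j$ as Fourier integrals of $f_j$, $g_j$ and computing the resulting triple integral via Parseval, but the operator route via \eqref{eq73} is by far the cleanest.
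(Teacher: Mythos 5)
Your proof is correct and follows essentially the same route as the paper: both start from the bra--ket identity $\varphi_j(x,y)=\langle g_j\vert e^{2\pi\imi(x\mathsf{p}+y\mathsf{q})}f_j\rangle$ of \eqref{eq73}, conjugate by $\mathsf{F}$ using \eqref{eq72} to turn $x\mathsf{p}+y\mathsf{q}$ into $x\mathsf{q}-y\mathsf{p}$, and then read \eqref{eq73} backwards with $(x,y,f_j,g_j)$ replaced by $(-y,x,\tilde f_j,\tilde g_j)$. The sign bookkeeping from $\mathsf{F}^{-1}\mathsf{q}\mathsf{F}=-\mathsf{p}$ is handled correctly.
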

\begin{proof}
Indeed, by using  \eqref{eq73} and \eqref{eq72}, we have
\begin{multline}
\varphi_j(x,y)=\langle g_j\vert e^{2\pi\imi (x\mathsf{p}+y\mathsf{q})}f_j\rangle=
\langle g_j\vert e^{2\pi\imi (x\mathsf{p}+y\mathsf{q})}\mathsf{F}\mathsf{F}^{-1}f_j\rangle\\=
\langle g_j\vert \mathsf{F}e^{2\pi\imi (x\mathsf{q}-y\mathsf{p})}\mathsf{F}^{-1}f_j\rangle=
\langle \mathsf{F}^{-1}g_j\vert e^{2\pi\imi (x\mathsf{q}-y\mathsf{p})}\mathsf{F}^{-1}f_j\rangle\equiv
\langle \tilde g_j\vert e^{2\pi\imi (x\mathsf{q}-y\mathsf{p})}\tilde f_j\rangle,
\end{multline}
and formula~\eqref{eq75} now follows by applying \eqref{eq73} backwards with $f_j$ and $g_j$ replaced with $\tilde f_j$ and $\tilde g_j$.
\end{proof}
\begin{proof}[Proof of Theorem~\ref{theorem}] By using \eqref{eq75}, we write
\begin{multline*}
 \varphi_1(x,y)\varphi_3(u,v)\\
 =\int_{\R^2}
 e^{2\pi\imi  (xs+ut)}\tilde f_1\left(s-\frac y2\right) \tilde f_3\left(t-\frac v2\right)\bar{\tilde g}_1\left(s+\frac y2\right)
\bar{\tilde g}_3\left(t+\frac v2\right)dsdt\\
\end{multline*}
and then continue by applying \eqref{eq50} and \eqref{eq60}
\begin{multline*}
 =\int_{\R^4}
 e^{2\pi\imi  \left(xs+ut+vs+yt\right)+\pi\imi (z^2-w^2)} \tilde f_4\left(t-\frac v2-z\right)\bar{\tilde g}_4\left(t+\frac v2-w\right)\\
 \times\tilde f_2(z) \bar{\tilde g}_2(w)
 \tilde f_0\left(s-\frac y2-z\right)
\bar{\tilde g}_0\left(s+\frac y2-w\right)
dsdtdzdw\\
\end{multline*}
changing the variables of integration $s\mapsto s+(z+w)/2$, $t\mapsto t+(z+w)/2$, the integrations over $s$ and $t$ can be performed by using \eqref{eq75}
\begin{multline*}
 =\int_{\R^2}
 e^{\pi\imi  \left(x+y+u+v+z-w\right)\left(z+w\right)}
 \varphi_4(y+u,v+z-w) \tilde f_2(z) \bar{\tilde g}_2(w)\varphi_0(x+v,y+z-w)
dzdw\\
 =\int_{\R^2}
 e^{\pi\imi  \left(x+y+u+v+z\right)\left(z+2w\right)}
 \varphi_4(y+u,v+z) \tilde f_2(z+w) \bar{\tilde g}_2(w)\varphi_0(x+v,y+z)
dzdw\\
\end{multline*}
where we have shifted $z\mapsto z+w$. Finally, by shifting $w\mapsto w-z/2$ and using again \eqref{eq75}, we obtain
\begin{multline*}
 =\int_{\R^2}
 e^{2\pi\imi  \left(x+y+u+v+z\right)w}
 \varphi_4(y+u,v+z) \tilde f_2\left(w+\frac z2\right) \bar{\tilde g}_2\left(w-\frac z2\right)\varphi_0(x+v,y+z)\\
\times dzdw
 =\int_{\R}
 \varphi_4(y+u,v+z) 
 \varphi_2(x+y+u+v+z,-z)  \varphi_0(x+v,y+z)
dz.
\end{multline*}
\end{proof}
\begin{example}\label{ex:kfv}
 If we take $f_j(x)=g_j(x)=\Phi_\hbar(x)$, then one calculates that
  \begin{multline}
\varphi_j(x,y)=\varphi^+(x,y)\equiv\int_{\R}\frac{\Phi_\hbar\left(t+\frac x 2\right)}{\Phi_\hbar\left(t-\frac x 2\right)}e^{2\pi \imi ty}dt\\
=\Psi_\hbar\left(x-\frac \imi{2\sqrt \hbar}\right)\Psi_\hbar\left(y+\frac \imi{2\sqrt \hbar}\right)\Psi_\hbar\left(-x-y+\frac \imi{2\sqrt \hbar}\right),
\end{multline}
where
\begin{equation}
\Psi_\hbar(x)\equiv\frac{\Phi_\hbar(x)}{\Phi_\hbar(0)}e^{-\pi \imi x^2/2}.
\end{equation}
The corresponding beta pentagon identity first has obtained  in \cite{KashaevLuoVartanov2012} as a limiting case of Spiridonov's elliptic beta-integral \cite{MR1846786}.
\end{example}
\begin{example}
 If we take $f_j(x)=g_j(-x)=\Phi_\hbar(x)$, then we have
  \begin{equation}
\varphi_j(x,y)=\varphi^-(x,y)\equiv\int_{\R}\frac{\Phi_\hbar\left(\frac x 2+t\right)}{\Phi_\hbar\left(\frac x 2-t\right)}e^{2\pi \imi ty}dt,
\end{equation}
and, unlike the previous case, it is not known, at least to the author, if the integral can be simplified any further. It is interesting to note that the function $\varphi^-(x,y)$ is real valued, and it is related to the previous example by Fourier transformation (cf. \eqref{eq:ft}), namely we have
\begin{equation}
2\varphi^-(-2x,-2y)=\int_{\R^2}e^{2\pi\imi(xv-yu)}\varphi^+(u,v)dudv.
\end{equation}
\end{example}
\begin{example}
  If we take $f_j(x)=\Phi_\hbar(x)$ and $g_j(x)=1$, we have
  \begin{equation}\label{eq:imp}
\varphi_j(x,y)=e^{-\pi\imi xy}(\mathsf{F}\Phi_\hbar)(y)=e^{-\pi\imi (x+y)y}\Phi_\hbar\left(y+\frac \imi{2\sqrt \hbar}\right)e^{\pi\imi(1+1/\hbar)/12}.
\end{equation}
This example has a specific quasi-periodicity property
\begin{equation}\label{eq:quasiper}
\varphi_j(x+1,y)=\varphi_j(x,y)e^{-\pi \imi y}.
\end{equation}
which allows to apply Theorem~\ref{thm1}, where $A=\R$, $B=\Z$, $g(i)=1$ for any $i\in [2]$ so that $\gamma_i=1$, and 
$h_x(m)=e^{-\pi\imi mx}$ so that the associated character is given by $\varepsilon(m)=h_m(m)=(-1)^m$.  By choosing $\alpha=\beta=1$, we also have $\mu_i=1$ and the corresponding function $\psi\in\C^{[4]\times \R^2}$ takes the form
 \begin{multline}
\psi_j(x,y)=\sum_{m\in\Z}\varphi_j(x,y+m)e^{-\pi\imi (x+m)m}\\=e^{-\pi\imi xy}\sum_{m\in\Z}(\mathsf{F}\Phi_\hbar)(y+m)e^{-2\pi\imi (x+1/2)m}\equiv\psi(x,y)
\end{multline}
with the quasi periodicity properties
\begin{multline}
\psi(x+m,y)=h_y(m)\psi(x,y)=e^{-\pi\imi my }\psi(x,y),\\
 \psi(x,y+m)=\varepsilon(m)h_x(-m)\psi(x,y)=
(-1)^me^{\pi\imi mx }\psi(x,y).
\end{multline}
and the five term integral relation
\begin{equation}
  \psi(x,y)\psi(u,v) =\int_0^1
 \psi(u+y,v-z) 
 \psi(x+y+u+v-z,z)  \psi(x+v,y-z)dz.
\end{equation}
Notice, that the integrand is a periodic function of $z$ as it should be according to the general theory of Section~\ref{sec:aug}. In \cite{AndersenKashaev2013}, it is shown that $\psi(x,y)$ admits an analytic continuation to $\C^2$ as a meromorphic function, and it has been used  for the reformulation of the Teichm\"uller TQFT.
\end{example}
 
\end{document}